\title{Partial Preferences for Mediated Bargaining}
\author{
Piero A. Bonatti
\and
Marco Faella
\institute{Dept. of Electrical Engineering and Information Technologies\\
Universit\`a di Napoli ``Federico II''\\
Italy}
\and
Luigi Sauro
}
\newcommand{\hide}[1]{}
\newcommand{\remove}[1]{}
\newcommand{\conv}[3]{\ensuremath{\langle #1, #2, #3 \rangle}\xspace}
\newcommand{\A}{\ensuremath{\mathcal{A}}\xspace}
\newcommand{\supp}{\ensuremath{\mathit{supp}}\xspace}
\newcommand{\incomp}{\not\lessgtr}
\newcommand{\prob}{\ensuremath{\Delta}\xspace}
\newtheorem{theorem}{Theorem}
\newtheorem{lemma}{Lemma}
\newenvironment{proof}
{\begin{trivlist} \item[] {\bf Proof.}}%
{\qed \end{trivlist}}
\newcommand{\qed}{\hspace*{\fill} \rule{1.1mm}{2.2mm}}
\begin{document} 

\maketitle
\begin{abstract}
In this work we generalize standard Decision Theory by assuming that two outcomes can also be incomparable.
Two motivating scenarios show how incomparability may be helpful to represent 
those situations where, due to lack of information, the decision maker would 
like to maintain different options \emph{alive} and defer the final decision.
In particular, a new axiomatization is given which turns out to be a weakening of the classical
set of axioms used in Decision Theory.
Preliminary results show how preferences involving complex distributions
are related to judgments on single alternatives.  
\end{abstract}

\section{Introduction}

In his pioneering work on Decision Theory \cite{Savage},
when delineating the fundamental properties of a preference relation $\prec$,
Savage makes the following point:
given two potential outcomes $f$ and $g$, 
it cannot be the case that $f\prec g$ and $g\prec f$ at the same time.
Clearly, this is logically equivalent to saying that 
either $f\not\prec g$ or $g\not\prec f$, which leads to three possible cases:
\emph{(i)} $f\not\prec g$ and $g\prec f$, 
\emph{(ii)} $f\prec g$ and $g\not\prec f$, or
\emph{(iii)} $f\not\prec g$ and $g\not\prec f$.
Then, he postulates that these three cases are the only possible judgments concerning $f$ and $g$. 
In particular, the last case ($f\not\prec g$ and $g\not\prec f$) allegedly 
implies that $f$ and $g$ are equivalent in the sense that in any situation wherein these are the only two possible options, the decision maker does not mind delegating to coin flipping. 
Consequently, in classical Decision Theory (\emph{CDT}) a very fundamental property of a  
preference relation is its totality. 

From the theory's very start, the hidden assumptions underlying this model of an \emph{economic man} raised some criticisms, one of the most influential of which being due to Simon: 
\begin{quote}
``This man is assumed to have knowledge of the relevant aspects of his environment which, if not absolutely complete, is at least impressively clear and voluminous. He is assumed also to have a 
well-organized and stable system of preferences, and a skill in computation that enables him to calculate, for the alternative courses of action that are available to him, which of these will permit him to reach the highest attainable point on his preference scale''  \cite{Simon55}.
\end{quote}
 
In recent years,  the massive development of e-commerce services makes Simon's criticisms even more cogent and the classical viewpoint on the economic man more and more idealistic.
Often preferences result from complex trade-offs between different attributes (functionalities, cost, Quality of Service (QoS), information disclosure risks, etc.) sometimes the user has only a vague idea of. Moreover, in some cases the user actually consists of a group of persons where internal debate 
does not easily end up with a total preference.
Finally, from a computer-science perspective, 
our aim could be to develop a software agent that acts in an electronic market on behalf of 
a real user. As we discuss in Section~\ref{sec:scenarios}, 
even if the user conforms with the classical economic man, her preference relation 
could be so complex that it could not be entirely and efficiently injected into the software agent.

Differently from Simon, 
who moved towards a problem-solving perspective,
in this work we challenge CDT on its playground. In particular, we provide an alternative axiomatization 
where two outcomes, due to the lack of information or an irreducible heterogeneity of the attributes involved, can also be incomparable. 
In Section 2,  we show two motivating scenarios where considering  preferences as incomplete seems to be appropriate.
In Section 3,  we introduce the new axiomatization by emphasizing which axioms of CDT remain 
the same and which axioms should be replaced. The resulting theory, that we call Partial Decision Theory, consists in  a weakening of CDT, that is, 
all the properties it satisfies are satisfied by CDT as well (but not vice versa).  
In Section 3, we show some general features of the new axiomatization; 
in particular, we argue that the proposed framework is not too weak, as it retains several 
desirable properties of CDT. Conclusions and future works end the paper.

\section{Motivating Scenarios}\label{sec:scenarios}


\vspace*{-.8em}
In this section we introduce two scenarios where partial preferences seem to provide a more natural way to describe a decision maker, or a software 
agent behaving on its behalf, than total preferences.  

In the first scenario, Bob wants to learn to play the piano and posts a request on a consumer-to-consumer social network. Soon after, he receives offers from two musicians, Carl and Mary. Carl provides two options: a 5 people class for 15 dollars per person or a one-to-one class for 35 dollars. Mary offers two similar options: a 3 people class for 20 dollars per person or a one-to-one class for 40 dollars.
Furthermore, they both offer a trial lesson.
Regarding Carl's options, Bob thinks that 5 people are too many for a class,
thus he prefers the one-to-one option. 
On the contrary, he judges the price difference between Mary's options somewhat excessive,
hence he prefers the 3 people class.  
If someone asks Bob \emph{``Do you prefer Mary's 3 people class or Carl's one-to-one class?''},
Bob will probably answer \emph{``I do not know, 
I first have to attend the trial lessons''}.
Notice that this is different from saying that the two options are equivalent, because in that case  Bob would simply flip a coin and choose one of them. On the contrary, it is more natural to think that these options are initially \emph{incomparable} 
and Bob will use the trial lessons to disambiguate them.   

More generally,  in absence of complete information it might be difficult for an individual to figure out a coherent total order over the bids and choose in a single step one of them.   
On the contrary, making a decision can be viewed as a multiple step process where offers are initially filtered according
to a partial preference relation. Then,  depending on the resulting offers, an individual can acquire further information and possibly rank them.   

In the second scenario, Alice's father has finally agreed to buy her a smartphone, 
and now she is browsing Ebay for possible offers. Unfortunately, the list is huge and patience is not Alice's forte. So, she would like to be assisted by a software agent to filter out undesired options. 
The software agent accepts constraints such as maximum cost and size, color restrictions, etc., and also a preference relation as a total order over bids. Then, according to the specified preference relation, the agent returns  the best offer. 
Clearly, Alice's desires are influenced by several attributes such as operating system, color, weight, brand, and so on. For instance, she has a preference over operating systems in the following decreasing order: OS1, OS2 and OS3; over colors: blue, red, black, white; and over brands: Brand1,
Brand2, Brand3. Furthermore, out of benevolence for her father, given a specific model the cheaper the better. However, such preferences over single attributes do not constitute a total order. Moreover, Alice cannot establish a priority over attributes, for instance she prefers a Brand3 phone 
with operating system OS1 to a Brand2 one with operating system OS2, but she prefers a Brand1 OS2 phone to a Brand2 OS1 one. 
Alice soon realizes that providing a total order to the software agent is frustrating and requires about the same effort as comparing all the offers by herself.
This scenario reveals the following issue: 
in designing a software agent that behaves on behalf of real users,  we have to take into account how users can \emph{instruct} the agent about their own preferences. In electronic markets where the number of offers can be huge, it could be unfeasible to transfer an 
exact representation of users' desires into a software agent.
In this case, the agent should make do with an approximate representation of users' desires as a partial order and return a restricted list of choices from which the user can select the preferred one. As a further advantage, the user retains the ability of applying unforeseen, situation-specific knowledge and preferences that had not been formalized in advance.

\section{Partial Preferences}

Let $\prob(\A)$ be the class of all probability distributions over a countable set of alternatives $\A$. Given two probability distributions $f,g\in\prob(\A)$ and $\alpha\in[0,1]$, 
we denote by $\conv{\alpha}{f}{g}$ the convex combination of $f$ and $g$
such that $\conv{\alpha}{f}{g}(a) = \alpha f(a) + (1-\alpha)g(a)$, for all $a\in\A$.
Moreover, for an alternative $a\in\A$, $[a]$ denotes the degenerate distribution that assigns probability
1 to $a$.  

A preference relation $\preceq$ is 
a binary relation on $\prob(\A)$, subject to the following classical Decision Theory axioms: \footnote{Here, we borrow the formulation presented in~\cite{Myerson}.}
\begin{enumerate} 
\item\label{ca:totality} $f\preceq g$ or $g\preceq f$ \emph{(totality)};
\item\label{ca:transitivity} if $f\preceq g$ and $g\preceq h$, then $f\preceq h$ \emph{(transitivity)};
\item\label{ca:prec1} if $f\prec g$ and $0\le \alpha <\beta\le 1$, then
                      $\conv{\beta}{f}{g} \prec \conv{\alpha}{f}{g}$; 
\item\label{ca:preceq1} if $f_1 \preceq g_1$, $f_2\preceq g_2$, and $0\le \alpha\le 1$, 
then $\conv{\alpha}{f_1}{f_2} \preceq \conv{\alpha}{g_1}{g_2}$; 
\item\label{ca:pref_dom} if $f_1\prec g_1$, $f_2 \preceq g_2$, and $0\le \alpha\le 1$, 
then $\conv{\alpha}{f_1}{f_2} \prec \conv{\alpha}{g_1}{g_2}$;
\end{enumerate} 
where, as usual, 
$f\prec g$ means that $f\preceq g$ and $g\not\preceq f$.
 
Notice that the first two axioms force $\preceq$ to be a total (hence reflexive) transitive relation, 
i.e., a total preorder (also called non-strict weak order). 
As shown by the previous scenarios, we advocate that in several contexts some outcomes may be incomparable,
meaning that $\preceq$ should be modeled as a (possibly partial) preorder. 
For this reason, we weaken the totality axiom in favor of  
one which requires reflexivity only:  
\begin{enumerate}[ 1'.]
\item\label{ax:reflexivity} $f\preceq f$;
\end{enumerate}  
Having allowed for incomparable distributions, at first look it may seem that the deal is done.
However, the obtained theory is so weak that it contemplates unrealistic preferences. 
The problem is that the previous axioms do not say anything about incomparable distributions which, once combined,
can be then freely judged. On the contrary, it is natural to think that, to some extent, the incomparability between distributions persists
also when they are combined.

Assume for example that $f$ and $g$ are incomparable and let $0\le \alpha<\beta\le 1$.
According to the axioms above, it is possible that $\conv{\alpha}{f}{g} \prec \conv{\beta}{f}{g}$.
This looks inappropriate: given that I cannot compare $f$ and $g$,
why should I strictly prefer one combination of $f$ and $g$ over another?
This leads to a further axiom:
\begin{enumerate}
\setcounter{enumi}{5}
\item\label{ax:upper1} if $0\le \alpha\le 1$,
      and $\conv{\alpha}{f_1}{f_2} \prec \conv{\alpha}{g_1}{g_2}$,
      then there exist $j,k \in\{1,2\}$ such that $f_j \prec g_k$.
\end{enumerate} 
Intuitively, a distribution $f$ of the type $\conv{\alpha}{f_1}{f_2}$ can be seen as a 
random choice (a.k.a. a \emph{lottery})
which picks $f_1$ with probability $\alpha$ and $f_2$ with probability $1-\alpha$.
Comparing $f$ with another distribution $g$ of the type $\conv{\alpha}{g_1}{g_2}$
encompasses comparing four possible draws: $(f_1,g_1)$, $(f_1,g_2)$, $(f_2,g_1)$, and $(f_2,g_2)$.\footnote{With probabilities $\alpha^2$, 
$\alpha(1-\alpha)$, $\alpha(1-\alpha)$ and $(1-\alpha)^2$, respectively.}
If there is no draw in which the second component is strictly better than the second, 
Axiom~\ref{ax:upper1} requires that $g$ is not strictly preferred to $f$.
Notice that one could easily come up with more stringent conditions on the persistence of
incomparability, for instance by requiring that a majority of draws favors the second component
over the first one. We instead propose a rather weak requirement, in the form of Axiom~\ref{ax:upper1},
which supports a wide range of preference relations, while still ensuring a number of interesting
properties, which are the subject of Section~\ref{sec:props}. 
 
We call \emph{Partial Decision Theory} (PDT) the new set of axioms 1'-\ref{ax:upper1}. 
Notice that Axiom~\ref{ax:upper1} can be easily derived in classical Decision Theory, 
consequently PDT is a weakening of classical Decision Theory,
in the sense that all preference relations satisfying classical Decision Theory also satisfy PDT.
The converse does not hold, 
as witnessed by the ``empty'' preference relation, i.e., the relation
that considers incomparable all distinct distributions.

As seen above, Axiom~\ref{ax:upper1} has been motivated by analyzing which preferences between 
$f=\conv{\alpha}{f_1}{f_2}$ and $g=\conv{\alpha}{g_1}{g_2}$ are admissible on the basis of the preferences on the possible draws $(f_1,g_1)$, $(f_1,g_2)$, $(f_2,g_1)$, and $(f_2,g_2)$.
In Table \ref{tab:table} we perform such an analysis extensively.  
In particular, for each entry, the left-hand side  $\bowtie_1\:\bowtie_2\:\bowtie_3\:\bowtie_4$, with $\bowtie\in\{\sim,\prec,\succ,\incomp\}$, 
is a consistent combination  $f_1\bowtie_1 g_1$, $f_1\bowtie_2 g_2$, $f_2 \bowtie_3 g_1$, and $f_2 \bowtie_4 g_2$,
whereas the right-hand side shows which preference relations between $f$ and $g$ are consistent with PDT. For example, the first entry is the case 
$f_1\sim g_1$, $f_1 \sim g_2$, $f_2\sim g_1$, and $f_2 \sim g_2$, then according with Axiom~\ref{ca:preceq1}, $f\sim g$ is the only possibility. 
Conversely, in some other cases 
(e.g. $\prec\incomp\succ\incomp$) no axiom can be applied, consequently $f$ can be in any relationship with $g$.
 
Table \ref{tab:table} provides a close look on how PDT behaves and hence it can be a good starting point to debate whether and how it can be extended or modified. 
For example, notice that in some cases $f$ and $g$ are comparable even if some of the underlying draws are not 
(e.g. due to  Axiom~\ref{ca:pref_dom}, $\prec\incomp\incomp\sim$ results in $f\prec g$). 
Somewhat conversely, $f$ and $g$ can be incomparable even if all the underlying draws are comparable (e.g. $\prec\succ\prec\succ$ admits $f\incomp g$). Finally, 
$f$ and $g$ are never forced to be incomparable, even if $f_1\incomp g_1$, $f_1\incomp g_2$, $f_2\incomp g_2$, and $f_2\incomp g_2$.       
 
\begin{table}
\begin{center}
{\small
\begin{tabular}{|p{2cm}r|p{2cm}r|p{2cm}r|p{2cm}r|}
$\sim$$\sim$$\sim$$\sim$ &   $\sim$ & $\sim$$\sim$$\prec$$\prec$ &   $\prec$ & $\sim$$\sim$$\succ$$\succ$ &   $\succ$ & $\sim$$\sim$$\incomp$$\incomp$ &   $\sim$  $\incomp$ \\ 
$\sim$$\prec$$\sim$$\prec$ &   $\prec$ & $\sim$$\prec$$\prec$$\prec$ &   $\prec$ & $\sim$$\prec$$\succ$$\sim$ &   $\sim$ & $\sim$$\prec$$\succ$$\prec$ &   $\prec$ \\ 
$\sim$$\prec$$\succ$$\succ$ &   $\succ$ & $\sim$$\prec$$\succ$$\incomp$ &   $\sim$  $\prec$  $\succ$  $\incomp$ & $\sim$$\prec$$\incomp$$\prec$ &   $\prec$ & $\sim$$\prec$$\incomp$$\incomp$ &   $\sim$  $\prec$  $\incomp$ \\ 
$\sim$$\succ$$\sim$$\succ$ &   $\succ$ & $\sim$$\succ$$\prec$$\sim$ &   $\sim$ & $\sim$$\succ$$\prec$$\prec$ &   $\prec$ & $\sim$$\succ$$\prec$$\succ$ &   $\succ$ \\ 
$\sim$$\succ$$\prec$$\incomp$ &   $\sim$  $\prec$  $\succ$  $\incomp$ & $\sim$$\succ$$\succ$$\succ$ &   $\succ$ & $\sim$$\succ$$\incomp$$\succ$ &   $\succ$ & $\sim$$\succ$$\incomp$$\incomp$ &   $\sim$  $\succ$  $\incomp$ \\ 
$\sim$$\incomp$$\sim$$\incomp$ &   $\sim$  $\incomp$ & $\sim$$\incomp$$\prec$$\prec$ &   $\prec$ & $\sim$$\incomp$$\prec$$\incomp$ &   $\sim$  $\prec$  $\incomp$ & $\sim$$\incomp$$\succ$$\succ$ &   $\succ$ \\ 
$\sim$$\incomp$$\succ$$\incomp$ &   $\sim$  $\succ$  $\incomp$ & $\sim$$\incomp$$\incomp$$\sim$ &   $\sim$ & $\sim$$\incomp$$\incomp$$\prec$ &   $\prec$ & $\sim$$\incomp$$\incomp$$\succ$ &   $\succ$ \\ 
$\sim$$\incomp$$\incomp$$\incomp$ &   $\sim$  $\incomp$ & $\prec$$\sim$$\sim$$\succ$ &   $\sim$  $\prec$  $\succ$  $\incomp$ & $\prec$$\sim$$\prec$$\sim$ &   $\prec$ & $\prec$$\sim$$\prec$$\prec$ &   $\prec$ \\ 
$\prec$$\sim$$\prec$$\succ$ &   $\sim$  $\prec$  $\succ$  $\incomp$ & $\prec$$\sim$$\prec$$\incomp$ &   $\sim$  $\prec$  $\incomp$ & $\prec$$\sim$$\succ$$\succ$ &   $\sim$  $\prec$  $\succ$  $\incomp$ & $\prec$$\sim$$\incomp$$\succ$ &   $\sim$  $\prec$  $\succ$  $\incomp$ \\ 
$\prec$$\sim$$\incomp$$\incomp$ &   $\sim$  $\prec$  $\incomp$ & $\prec$$\prec$$\sim$$\sim$ &   $\prec$ & $\prec$$\prec$$\sim$$\prec$ &   $\prec$ & $\prec$$\prec$$\sim$$\succ$ &   $\sim$  $\prec$  $\succ$  $\incomp$ \\ 
$\prec$$\prec$$\sim$$\incomp$ &   $\sim$  $\prec$  $\incomp$ & $\prec$$\prec$$\prec$$\sim$ &   $\prec$ & $\prec$$\prec$$\prec$$\prec$ &   $\prec$ & $\prec$$\prec$$\prec$$\succ$ &   $\sim$  $\prec$  $\succ$  $\incomp$ \\ 
$\prec$$\prec$$\prec$$\incomp$ &   $\sim$  $\prec$  $\incomp$ & $\prec$$\prec$$\succ$$\sim$ &   $\prec$ & $\prec$$\prec$$\succ$$\prec$ &   $\prec$ & $\prec$$\prec$$\succ$$\succ$ &   $\sim$  $\prec$  $\succ$  $\incomp$ \\ 
$\prec$$\prec$$\succ$$\incomp$ &   $\sim$  $\prec$  $\succ$  $\incomp$ & $\prec$$\prec$$\incomp$$\sim$ &   $\prec$ & $\prec$$\prec$$\incomp$$\prec$ &   $\prec$ & $\prec$$\prec$$\incomp$$\succ$ &   $\sim$  $\prec$  $\succ$  $\incomp$ \\ 
$\prec$$\prec$$\incomp$$\incomp$ &   $\sim$  $\prec$  $\incomp$ & $\prec$$\succ$$\sim$$\succ$ &   $\sim$  $\prec$  $\succ$  $\incomp$ & $\prec$$\succ$$\prec$$\sim$ &   $\prec$ & $\prec$$\succ$$\prec$$\prec$ &   $\prec$ \\ 
$\prec$$\succ$$\prec$$\succ$ &   $\sim$  $\prec$  $\succ$  $\incomp$ & $\prec$$\succ$$\prec$$\incomp$ &   $\sim$  $\prec$  $\succ$  $\incomp$ & $\prec$$\succ$$\succ$$\succ$ &   $\sim$  $\prec$  $\succ$  $\incomp$ & $\prec$$\succ$$\incomp$$\succ$ &   $\sim$  $\prec$  $\succ$  $\incomp$ \\ 
$\prec$$\succ$$\incomp$$\incomp$ &   $\sim$  $\prec$  $\succ$  $\incomp$ & $\prec$$\incomp$$\sim$$\succ$ &   $\sim$  $\prec$  $\succ$  $\incomp$ & $\prec$$\incomp$$\sim$$\incomp$ &   $\sim$  $\prec$  $\incomp$ & $\prec$$\incomp$$\prec$$\sim$ &   $\prec$ \\ 
$\prec$$\incomp$$\prec$$\prec$ &   $\prec$ & $\prec$$\incomp$$\prec$$\succ$ &   $\sim$  $\prec$  $\succ$  $\incomp$ & $\prec$$\incomp$$\prec$$\incomp$ &   $\sim$  $\prec$  $\incomp$ & $\prec$$\incomp$$\succ$$\succ$ &   $\sim$  $\prec$  $\succ$  $\incomp$ \\ 
$\prec$$\incomp$$\succ$$\incomp$ &   $\sim$  $\prec$  $\succ$  $\incomp$ & $\prec$$\incomp$$\incomp$$\sim$ &   $\prec$ & $\prec$$\incomp$$\incomp$$\prec$ &   $\prec$ & $\prec$$\incomp$$\incomp$$\succ$ &   $\sim$  $\prec$  $\succ$  $\incomp$ \\ 
$\prec$$\incomp$$\incomp$$\incomp$ &   $\sim$  $\prec$  $\incomp$ & $\succ$$\sim$$\sim$$\prec$ &   $\sim$  $\prec$  $\succ$  $\incomp$ & $\succ$$\sim$$\prec$$\prec$ &   $\sim$  $\prec$  $\succ$  $\incomp$ & $\succ$$\sim$$\succ$$\sim$ &   $\sim$  $\succ$  $\incomp$ \\ 
$\succ$$\sim$$\succ$$\prec$ &   $\sim$  $\prec$  $\succ$  $\incomp$ & $\succ$$\sim$$\succ$$\succ$ &   $\succ$ & $\succ$$\sim$$\succ$$\incomp$ &   $\sim$  $\succ$  $\incomp$ & $\succ$$\sim$$\incomp$$\prec$ &   $\sim$  $\prec$  $\succ$  $\incomp$ \\ 
$\succ$$\sim$$\incomp$$\incomp$ &   $\sim$  $\succ$  $\incomp$ & $\succ$$\prec$$\sim$$\prec$ &   $\sim$  $\prec$  $\succ$  $\incomp$ & $\succ$$\prec$$\prec$$\prec$ &   $\sim$  $\prec$  $\succ$  $\incomp$ & $\succ$$\prec$$\succ$$\sim$ &   $\sim$  $\prec$  $\succ$  $\incomp$ \\ 
$\succ$$\prec$$\succ$$\prec$ &   $\sim$  $\prec$  $\succ$  $\incomp$ & $\succ$$\prec$$\succ$$\succ$ &   $\succ$ & $\succ$$\prec$$\succ$$\incomp$ &   $\sim$  $\prec$  $\succ$  $\incomp$ & $\succ$$\prec$$\incomp$$\prec$ &   $\sim$  $\prec$  $\succ$  $\incomp$ \\ 
$\succ$$\prec$$\incomp$$\incomp$ &   $\sim$  $\prec$  $\succ$  $\incomp$ & $\succ$$\succ$$\sim$$\sim$ &   $\sim$  $\succ$  $\incomp$ & $\succ$$\succ$$\sim$$\prec$ &   $\sim$  $\prec$  $\succ$  $\incomp$ & $\succ$$\succ$$\sim$$\succ$ &   $\succ$ \\ 
$\succ$$\succ$$\sim$$\incomp$ &   $\sim$  $\succ$  $\incomp$ & $\succ$$\succ$$\prec$$\sim$ &   $\sim$  $\prec$  $\succ$  $\incomp$ & $\succ$$\succ$$\prec$$\prec$ &   $\sim$  $\prec$  $\succ$  $\incomp$ & $\succ$$\succ$$\prec$$\succ$ &   $\succ$ \\ 
$\succ$$\succ$$\prec$$\incomp$ &   $\sim$  $\prec$  $\succ$  $\incomp$ & $\succ$$\succ$$\succ$$\sim$ &   $\sim$  $\succ$  $\incomp$ & $\succ$$\succ$$\succ$$\prec$ &   $\sim$  $\prec$  $\succ$  $\incomp$ & $\succ$$\succ$$\succ$$\succ$ &   $\succ$ \\ 
$\succ$$\succ$$\succ$$\incomp$ &   $\sim$  $\succ$  $\incomp$ & $\succ$$\succ$$\incomp$$\sim$ &   $\sim$  $\succ$  $\incomp$ & $\succ$$\succ$$\incomp$$\prec$ &   $\sim$  $\prec$  $\succ$  $\incomp$ & $\succ$$\succ$$\incomp$$\succ$ &   $\succ$ \\ 
$\succ$$\succ$$\incomp$$\incomp$ &   $\sim$  $\succ$  $\incomp$ & $\succ$$\incomp$$\sim$$\prec$ &   $\sim$  $\prec$  $\succ$  $\incomp$ & $\succ$$\incomp$$\sim$$\incomp$ &   $\sim$  $\succ$  $\incomp$ & $\succ$$\incomp$$\prec$$\prec$ &   $\sim$  $\prec$  $\succ$  $\incomp$ \\ 
$\succ$$\incomp$$\prec$$\incomp$ &   $\sim$  $\prec$  $\succ$  $\incomp$ & $\succ$$\incomp$$\succ$$\sim$ &   $\sim$  $\succ$  $\incomp$ & $\succ$$\incomp$$\succ$$\prec$ &   $\sim$  $\prec$  $\succ$  $\incomp$ & $\succ$$\incomp$$\succ$$\succ$ &   $\succ$ \\ 
$\succ$$\incomp$$\succ$$\incomp$ &   $\sim$  $\succ$  $\incomp$ & $\succ$$\incomp$$\incomp$$\sim$ &   $\sim$  $\succ$  $\incomp$ & $\succ$$\incomp$$\incomp$$\prec$ &   $\sim$  $\prec$  $\succ$  $\incomp$ & $\succ$$\incomp$$\incomp$$\succ$ &   $\succ$ \\ 
$\succ$$\incomp$$\incomp$$\incomp$ &   $\sim$  $\succ$  $\incomp$ & $\incomp$$\sim$$\sim$$\incomp$ &   $\sim$  $\incomp$ & $\incomp$$\sim$$\prec$$\prec$ &   $\sim$  $\prec$  $\incomp$ & $\incomp$$\sim$$\prec$$\incomp$ &   $\sim$  $\prec$  $\incomp$ \\ 
$\incomp$$\sim$$\succ$$\succ$ &   $\sim$  $\succ$  $\incomp$ & $\incomp$$\sim$$\succ$$\incomp$ &   $\sim$  $\succ$  $\incomp$ & $\incomp$$\sim$$\incomp$$\sim$ &   $\sim$  $\incomp$ & $\incomp$$\sim$$\incomp$$\prec$ &   $\sim$  $\prec$  $\incomp$ \\ 
$\incomp$$\sim$$\incomp$$\succ$ &   $\sim$  $\succ$  $\incomp$ & $\incomp$$\sim$$\incomp$$\incomp$ &   $\sim$  $\incomp$ & $\incomp$$\prec$$\sim$$\prec$ &   $\sim$  $\prec$  $\incomp$ & $\incomp$$\prec$$\sim$$\incomp$ &   $\sim$  $\prec$  $\incomp$ \\ 
$\incomp$$\prec$$\prec$$\prec$ &   $\sim$  $\prec$  $\incomp$ & $\incomp$$\prec$$\prec$$\incomp$ &   $\sim$  $\prec$  $\incomp$ & $\incomp$$\prec$$\succ$$\sim$ &   $\sim$  $\prec$  $\succ$  $\incomp$ & $\incomp$$\prec$$\succ$$\prec$ &   $\sim$  $\prec$  $\succ$  $\incomp$ \\ 
$\incomp$$\prec$$\succ$$\succ$ &   $\sim$  $\prec$  $\succ$  $\incomp$ & $\incomp$$\prec$$\succ$$\incomp$ &   $\sim$  $\prec$  $\succ$  $\incomp$ & $\incomp$$\prec$$\incomp$$\sim$ &   $\sim$  $\prec$  $\incomp$ & $\incomp$$\prec$$\incomp$$\prec$ &   $\sim$  $\prec$  $\incomp$ \\ 
$\incomp$$\prec$$\incomp$$\succ$ &   $\sim$  $\prec$  $\succ$  $\incomp$ & $\incomp$$\prec$$\incomp$$\incomp$ &   $\sim$  $\prec$  $\incomp$ & $\incomp$$\succ$$\sim$$\succ$ &   $\sim$  $\succ$  $\incomp$ & $\incomp$$\succ$$\sim$$\incomp$ &   $\sim$  $\succ$  $\incomp$ \\ 
$\incomp$$\succ$$\prec$$\sim$ &   $\sim$  $\prec$  $\succ$  $\incomp$ & $\incomp$$\succ$$\prec$$\prec$ &   $\sim$  $\prec$  $\succ$  $\incomp$ & $\incomp$$\succ$$\prec$$\succ$ &   $\sim$  $\prec$  $\succ$  $\incomp$ & $\incomp$$\succ$$\prec$$\incomp$ &   $\sim$  $\prec$  $\succ$  $\incomp$ \\ 
$\incomp$$\succ$$\succ$$\succ$ &   $\sim$  $\succ$  $\incomp$ & $\incomp$$\succ$$\succ$$\incomp$ &   $\sim$  $\succ$  $\incomp$ & $\incomp$$\succ$$\incomp$$\sim$ &   $\sim$  $\succ$  $\incomp$ & $\incomp$$\succ$$\incomp$$\prec$ &   $\sim$  $\prec$  $\succ$  $\incomp$ \\ 
$\incomp$$\succ$$\incomp$$\succ$ &   $\sim$  $\succ$  $\incomp$ & $\incomp$$\succ$$\incomp$$\incomp$ &   $\sim$  $\succ$  $\incomp$ & $\incomp$$\incomp$$\sim$$\sim$ &   $\sim$  $\incomp$ & $\incomp$$\incomp$$\sim$$\prec$ &   $\sim$  $\prec$  $\incomp$ \\ 
$\incomp$$\incomp$$\sim$$\succ$ &   $\sim$  $\succ$  $\incomp$ & $\incomp$$\incomp$$\sim$$\incomp$ &   $\sim$  $\incomp$ & $\incomp$$\incomp$$\prec$$\sim$ &   $\sim$  $\prec$  $\incomp$ & $\incomp$$\incomp$$\prec$$\prec$ &   $\sim$  $\prec$  $\incomp$ \\ 
$\incomp$$\incomp$$\prec$$\succ$ &   $\sim$  $\prec$  $\succ$  $\incomp$ & $\incomp$$\incomp$$\prec$$\incomp$ &   $\sim$  $\prec$  $\incomp$ & $\incomp$$\incomp$$\succ$$\sim$ &   $\sim$  $\succ$  $\incomp$ & $\incomp$$\incomp$$\succ$$\prec$ &   $\sim$  $\prec$  $\succ$  $\incomp$ \\ 
$\incomp$$\incomp$$\succ$$\succ$ &   $\sim$  $\succ$  $\incomp$ & $\incomp$$\incomp$$\succ$$\incomp$ &   $\sim$  $\succ$  $\incomp$ & $\incomp$$\incomp$$\incomp$$\sim$ &   $\sim$  $\incomp$ & $\incomp$$\incomp$$\incomp$$\prec$ &   $\sim$  $\prec$  $\incomp$ \\ 
$\incomp$$\incomp$$\incomp$$\succ$ &   $\sim$  $\succ$  $\incomp$ & $\incomp$$\incomp$$\incomp$$\incomp$ &   $\sim$  $\incomp$ & & \\
\end{tabular}
}
\end{center}
\caption{Extensive analysis of PDT\label{tab:table}}
\end{table}

\section{Properties of Partial Preferences} \label{sec:props}

In the following, given two distributions $f$ and $g$, 
we write $f\sim g$ for $f\preceq g$ and $g\preceq f$ (i.e., equivalence),
and we write $f \incomp g$ for $f \not\preceq g$ and $g \not\preceq f$ (i.e., incomparability).
First, we show that two relevant properties of classical Decision Theory 
continue to hold in PDT.  
Given two distributions $f,g\in\prob(\A)$, we write $f\rightarrow g$ in case there exist $\epsilon > 0$ 
and two alternatives $a_1$ and $a_2$ such that
\emph{(i)} $[a_1] \prec [a_2]$,
\emph{(ii)} $g(a_1)=f(a_1)-\epsilon$, $g(a_2)=f(a_2)+\epsilon$, and 
\emph{(iii)} for all $a\ne a_1,a_2$, $g(a)=f(a)$.
When $f \rightarrow g$, $g$ can be obtained from $f$ by shifting a positive amount of probability
from an alternative $a_1$ to a strictly preferred alternative $a_2$.
Then, denote by $f\Rightarrow g$ the transitive closure of $\rightarrow$.
\begin{theorem}\label{theo:shifting}
Let $f,g\in\prob(\A)$. If $f\Rightarrow g$ then $f\prec g$.  
\end{theorem}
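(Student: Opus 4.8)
The plan is to reduce the statement to a single shifting step and then close under transitivity. First I would record that $\prec$ is transitive: since Axioms~\ref{ca:transitivity} and~1' make $\preceq$ a preorder, if $f\prec g$ and $g\prec h$ then $f\preceq h$ by transitivity, while $h\preceq f$ would give $h\preceq g$, contradicting $h\not\preceq g$; hence $f\prec h$. Because $\Rightarrow$ is by definition the transitive closure of $\rightarrow$, any derivation $f=h_0\rightarrow h_1\rightarrow\cdots\rightarrow h_n=g$ together with transitivity of $\prec$ reduces the theorem to the single-step claim: \emph{if $f\rightarrow g$ then $f\prec g$}.

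For the single step, let $a_1,a_2,\epsilon$ be as in the definition of $\rightarrow$, so that $[a_1]\prec[a_2]$, $g(a_1)=f(a_1)-\epsilon$, $g(a_2)=f(a_2)+\epsilon$, and $f,g$ agree on every other alternative. Since $g(a_1)\ge 0$ forces $f(a_1)\ge\epsilon>0$, the combined mass $m:=f(a_1)+f(a_2)=g(a_1)+g(a_2)$ is strictly positive. The key idea is to split $f$ and $g$ as convex combinations that share a common remainder. Put $\lambda:=f(a_1)/m$ and $\mu:=g(a_1)/m=(f(a_1)-\epsilon)/m$, and set $f_1:=\conv{\lambda}{[a_1]}{[a_2]}$ and $g_1:=\conv{\mu}{[a_1]}{[a_2]}$, the normalized restrictions of $f$ and $g$ to $\{a_1,a_2\}$. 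When $m<1$, let $h$ be the common normalized remainder $h(a):=f(a)/(1-m)=g(a)/(1-m)$ for $a\ne a_1,a_2$; a direct check gives $f=\conv{m}{f_1}{h}$ and $g=\conv{m}{g_1}{h}$.

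Now I would invoke two axioms. Since $0\le\mu<\lambda\le 1$, Axiom~\ref{ca:prec1} applied to $[a_1]\prec[a_2]$ (with $\alpha=\mu$, $\beta=\lambda$) yields $f_1=\conv{\lambda}{[a_1]}{[a_2]}\prec\conv{\mu}{[a_1]}{[a_2]}=g_1$, i.e.\ the sub-lottery putting more weight on the worse alternative is strictly dispreferred. Then Axiom~\ref{ca:pref_dom}, instantiated with $f_1\prec g_1$, the reflexive pair $h\preceq h$ (Axiom~1'), and weight $\alpha=m$, gives $f=\conv{m}{f_1}{h}\prec\conv{m}{g_1}{h}=g$, as required. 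The degenerate case $m=1$ needs no remainder at all: there $f=f_1\prec g_1=g$ already follows from Axiom~\ref{ca:prec1}.

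The main obstacle is conceptual rather than computational: choosing the decomposition so that the probability shift from $a_1$ to $a_2$ is isolated into a two-point sub-lottery governed by Axiom~\ref{ca:prec1}, while the untouched mass is packaged into an \emph{identical} remainder on both sides, so that Axiom~\ref{ca:pref_dom} upgrades the strict preference of the sub-lottery to a strict preference of the whole distribution. Once this decomposition is fixed, verifying $f=\conv{m}{f_1}{h}$ and $g=\conv{m}{g_1}{h}$ is routine arithmetic, and lifting from one step to $\Rightarrow$ is immediate from transitivity of $\prec$.
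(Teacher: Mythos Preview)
Your proof is correct and follows essentially the same approach as the paper: isolate the two-point sub-lottery on $\{a_1,a_2\}$, apply Axiom~\ref{ca:prec1} to obtain a strict preference there, and then use Axiom~\ref{ca:pref_dom} with the shared remainder $h$ to lift this to $f\prec g$. You are in fact slightly more careful than the paper, since you explicitly justify the reduction to a single step via transitivity of $\prec$ and handle the degenerate case $m=1$ separately.
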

\begin{proof}
It suffices to show that $f\rightarrow g$ implies $f\prec g$.
Assume that for some $a_1$ and $a_2$, where $[a_1] \prec [a_2]$, there exists $\epsilon>0$ such that $f(a_1)=g(a_1)+\epsilon$ and $f(a_2)= g(a_2)-\epsilon$.
Let $\gamma=g(a_1)+g(a_2)=f(a_1)+f(a_2)$, $\alpha=\frac{f(a_2)}{\gamma}$ and $\beta=\frac{g(a_2)}{\gamma}$. Then, $g$ can be written as $\conv{\gamma}{g'}{h}$ and $f$ as $\conv{\gamma}{f'}{h}$,
where $h$ is a probability distribution such that $h(a_1)=h(a_2)=0$, $g' = \conv{\beta}{[a_2]}{[a_1]}$ and 
$f' = \conv{\alpha}{[a_2]}{[a_1]}$. 
Since $[a_1] \prec [a_2]$ and $\alpha<\beta$, Axiom~\ref{ca:prec1} implies that $f'\prec g'$. 
Then, due to Axiom~\ref{ca:pref_dom}, $f\prec g$. 
\end{proof}
Another property that can be proved in PDT is that if each alternative coming out from a distribution $f$ is dominated by all the 
alternatives from $g$, then $f \preceq g$.
Preliminarily, given a distribution $f \in \prob(\A)$, the \emph{support} of $f$, $\supp(f)=\{a\in\A\mid f(a)>0 \}$, is the set of alternatives to which $f$ assigns positive probability.   
\begin{theorem}\label{theo:preceq}
Let $f,g\in\prob(\A)$ be such that, for all $a\in\supp(f)$ and $a' \in\supp(g)$, $[a] \preceq [a']$.
Then, $f \preceq g$.
\end{theorem}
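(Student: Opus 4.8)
The plan is to reduce the comparison between the mixtures $f$ and $g$ to the pointwise hypotheses $[a]\preceq[a']$ by repeated use of the mixture axiom (Axiom~\ref{ca:preceq1}), exploiting the trivial identity $\conv{\mu}{h}{h}=h$, valid for every $h\in\prob(\A)$ and $\mu\in[0,1]$ directly from the definition of convex combination. I would first treat the case of finitely supported $f$ and $g$ by two nested inductions, and then address arbitrary countable support separately.

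For the finite case, fix $a\in\supp(f)$ and prove $[a]\preceq g$ by induction on $\card{\supp(g)}$. The base case $\supp(g)=\{a'\}$ is the hypothesis $[a]\preceq[a']=g$. For the step, choose $a'\in\supp(g)$, put $\mu=g(a')<1$, and write $g=\conv{\mu}{[a']}{g'}$ where $g'$ is the remainder of $g$ renormalized so that $\supp(g')=\supp(g)\setminus\{a'\}$. The induction hypothesis gives $[a]\preceq g'$ and the theorem's hypothesis gives $[a]\preceq[a']$, so applying Axiom~\ref{ca:preceq1} with $\alpha=\mu$ and both left arguments equal to $[a]$ yields $[a]=\conv{\mu}{[a]}{[a]}\preceq\conv{\mu}{[a']}{g'}=g$.

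Once $[a]\preceq g$ is available for every $a\in\supp(f)$, the second, dual induction proceeds on $\card{\supp(f)}$. Writing $f=\conv{\lambda}{[a]}{f'}$ with $\lambda=f(a)$ and $f'$ the renormalized remainder, the induction hypothesis gives $f'\preceq g$, the previous step gives $[a]\preceq g$, and Axiom~\ref{ca:preceq1} with $\alpha=\lambda$ and both right arguments equal to $g$ gives $f=\conv{\lambda}{[a]}{f'}\preceq\conv{\lambda}{g}{g}=g$, as required; the base case $\supp(f)=\{a\}$ is precisely $[a]\preceq g$. Both inductions use nothing beyond Axiom~\ref{ca:preceq1} and the identity above, and notably the diagonal instances $[a]\preceq[a]$ come already supplied by the hypothesis rather than from reflexivity.

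The hard part will be passing from finite to arbitrary countable support, since $\supp(f)$ and $\supp(g)$ may be infinite and the inductions only reach finitely supported distributions. The axioms of PDT deliver mixture monotonicity but no continuity or closure of $\preceq$ under limits, so the natural move---truncating to finite subsupports $f_n,g_n$, for which the finite case already gives $f_n\preceq g_n$, and then letting $f_n\to f$ and $g_n\to g$---does not immediately conclude $f\preceq g$. Closing this gap would require either strengthening the axiomatization with a continuity (or closure) principle for $\preceq$, under which the truncation argument goes through, or an indirect argument showing that a putative witness of $f\not\preceq g$ must already appear on a finite subsupport, perhaps by combining the strict-mixture axioms (Axioms~\ref{ca:prec1} and~\ref{ca:pref_dom}) to force a contradiction. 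I expect essentially all of the difficulty of the theorem to concentrate here, the finite case being routine bookkeeping around Axiom~\ref{ca:preceq1}.
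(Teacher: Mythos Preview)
Your finite-support argument is essentially identical to the paper's own proof: the same two nested inductions, first on $\card{\supp(g)}$ to obtain $[a]\preceq g$ for each $a\in\supp(f)$, then on $\card{\supp(f)}$ to conclude $f\preceq g$, each step invoking Axiom~\ref{ca:preceq1} together with the identity $h=\conv{\alpha}{h}{h}$.

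Regarding your concern about infinite supports: the paper's proof does not address it either. It simply writes ``by induction on the cardinality $n$ of $\supp(g)$'' and ``by induction on the cardinality $m$ of $\supp(f)$'', which tacitly presumes finite support. So the gap you identify is real relative to the stated setting ($\A$ countable, $\prob(\A)$ all probability distributions on $\A$), but the paper offers no further argument to close it; your proposal is in this respect more scrupulous than the original.
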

\begin{proof}
We first show that for all $a\in \supp(f)$, $[a] \preceq g$. The proof is by induction
on the cardinality $n$ of $\supp(g)$ where the case $n=1$ is trivial.
Assume $n>1$, then $g$ can be written as $\conv{\alpha}{[a']}{g'}$, where $a'$ is a generic alternative from $\supp(g)$ and $g'(a')=0$.
Clearly, the distribution $[a]$ can also be written as $\conv{\alpha}{[a]}{[a]}$.
By assumption $a\preceq a'$ and, since the cardinality of $g'$ is $n-1$, by induction $[a]\preceq g$. Then, by applying Axiom~\ref{ca:preceq1} we have $[a]\preceq g$. 

Now, the proof is by induction on the cardinality $m$ of $\supp(f)$ where the base case $m=1$ has been proved above. Assume $m>1$ and let
$a\in \supp(f)$,
then $f$ can be written as $\conv{\alpha}{[a]}{f_{-a}}$, where  $f_{-a}(a)=0$ and $f_{-a}(a')= \frac{f(a')}{1-f(a)}$ for all $a' \in \supp(f)\setminus\{a\}$.  
We already proved that $[a] \preceq g$ and by induction hypothesis it holds also that $f_{-a}\preceq g$. Then, by writing $g$ as
$\conv{\alpha}{g}{g}$ and applying Axiom~\ref{ca:preceq1}  we have that $f\preceq g$.
\end{proof}
From Theorem~\ref{theo:preceq} it is immediate to show that distributions over equivalent alternatives are equivalent themselves.
What if some of the alternatives are incomparable? 
We show that their distributions are either equivalent or incomparable, 
as due to Axiom~\ref{ax:upper1} no strict preference can be derived.

\begin{lemma}\label{lem:incomparable_not_prec}
Let $f,g\in\prob(\A)$ be such that, for all $a\in\supp(f)$ and $a' \in\supp(g)$, $a \incomp a'$.
Then, either $f \incomp g$ or $f \sim g$.
\end{lemma}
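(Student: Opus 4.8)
The plan is to reduce the statement to the claim that no strict preference can arise, namely that under the hypothesis one has both $f \not\prec g$ and $g \not\prec f$. This suffices: the four mutually exclusive cases determined by whether $f\preceq g$ holds and whether $g\preceq f$ holds are exactly $f\sim g$, $f\prec g$, $g\prec f$, and $f\incomp g$, so ruling out the two strict cases leaves precisely $f\sim g$ or $f\incomp g$. Since the hypothesis is symmetric in $f$ and $g$, it is enough to prove $f\not\prec g$ and then apply the same result with the roles of $f$ and $g$ exchanged.

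I would prove $f\not\prec g$ by induction on $\card{\supp(f)}+\card{\supp(g)}$. In the base case both supports are singletons, say $f=[a]$ and $g=[a']$ with $[a]\incomp[a']$; then $f\incomp g$ and in particular $f\not\prec g$. For the inductive step, suppose toward a contradiction that $f\prec g$; then at least one of the two supports has size at least two. Assume first that $\supp(g)$ does. Choosing $b\in\supp(g)$ and setting $\alpha=g(b)\in(0,1)$, I can write $g=\conv{\alpha}{[b]}{g_{-b}}$, where $g_{-b}$ is the renormalisation of $g$ on $\supp(g)\setminus\{b\}$, while writing $f$ trivially as $\conv{\alpha}{f}{f}$. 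From $\conv{\alpha}{f}{f}\prec\conv{\alpha}{[b]}{g_{-b}}$, Axiom~\ref{ax:upper1} yields $f\prec g_k$ for some $g_k\in\{[b],g_{-b}\}$.

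The decisive observation is that $\supp(g_k)\subsetneq\supp(g)$, so $\card{\supp(f)}+\card{\supp(g_k)}$ is strictly smaller, while every alternative of $\supp(g_k)\subseteq\supp(g)$ remains incomparable to every alternative of $\supp(f)$. Hence the pair $(f,g_k)$ satisfies the induction hypothesis, giving $f\not\prec g_k$ and contradicting $f\prec g_k$. The subcase where instead $\supp(f)$ has size at least two is handled analogously: split $f=\conv{\alpha}{[a]}{f_{-a}}$, write $g=\conv{\alpha}{g}{g}$, and extract $f_j\prec g$ from Axiom~\ref{ax:upper1} to reach the same kind of contradiction. This closes the induction and establishes $f\not\prec g$.

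I expect the main subtlety to lie in the correct deployment of Axiom~\ref{ax:upper1}. The axiom only constrains two distributions presented as convex combinations with a \emph{common} weight $\alpha$, so the key idea is to decompose only the side whose support is non-singleton and to present the other side as the trivial combination $\conv{\alpha}{h}{h}$; then the axiom's conclusion $f_j\prec g_k$ collapses to a comparison in which one argument is unchanged and the other has strictly smaller support, which is exactly what drives the induction down. One must also verify that incomparability is inherited by the extracted sub-distribution, but this is immediate from $\supp(g_k)\subseteq\supp(g)$. As in the proof of Theorem~\ref{theo:preceq}, the induction presupposes finitely supported distributions.
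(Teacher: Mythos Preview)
Your argument is correct and follows essentially the same route as the paper: induction on $\card{\supp(f)}+\card{\supp(g)}$, decomposing one side as $\conv{\alpha}{[\cdot]}{\cdot_{-}}$ while writing the other as the trivial combination $\conv{\alpha}{h}{h}$, and invoking Axiom~\ref{ax:upper1} to push the putative strict preference down to a pair with strictly smaller total support. The only cosmetic differences are that the paper handles $\prec$ and $\succ$ simultaneously via a single $\bowtie$ (rather than reducing to one direction by symmetry) and always decomposes $f$ after assuming w.l.o.g.\ that $\card{\supp(f)}>1$, whereas you branch on which support is non-singleton; neither change affects the substance.
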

\begin{proof}
 Let $n_f = |\supp(f)|$ and $n_g = |\supp(g)|$,
 we proceed by induction on $n_f+n_g$.
 If $n_f+n_g = 2$, the thesis is obviously true.
 Otherwise, assume w.l.o.g.\ that $n_f>1$ and let $a \in \supp(f)$.
 We can write $f$ as $\conv{f(a)}{[a]}{f_{-a}}$,
 where $f_{-a}(a) = 0$ and $f_{-a}(a') = \frac{f(a')}{1-f(a)}$ for all $a' \in \supp(f)\setminus\{a\}$.
 Since the support of $f_{-a}$ is smaller than the one of $f$, we can apply the inductive
 hypothesis to the pair $f_{-a},g$, obtaining that either $f_{-a} \incomp g$ or $f_{-a} \sim g$.
 We can also apply the inductive hypothesis to the pair $[a],g$, obtaining that 
 $[a] \incomp g$ or $[a] \sim g$.
 Assume by contradiction that  $f \bowtie g$ for some $\bowtie \in \{\prec, \succ\}$.
 By Axiom~\ref{ax:upper1}, it holds $[a] \bowtie g$ or $f_{-a} \bowtie g$, 
 which is a contradiction.
\end{proof}
Finally, the following generalization of Lemma~\ref{lem:incomparable_not_prec} shows that
alternatives that are either incomparable or equivalent lead to distributions
that are themselves either incomparable or equivalent.
\begin{theorem}\label{theo:equivncomp_prec}
Let $f,g\in\prob(\A)$ be such that, for all $a\in\supp(f)$ and $a' \in\supp(g)$, either $a \sim a'$ or $a \incomp a'$.
Then, $f \incomp g$ or $f \sim g$.
\end{theorem}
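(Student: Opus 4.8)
The plan is to imitate the proof of Lemma~\ref{lem:incomparable_not_prec} almost verbatim, proceeding by strong induction on $n_f + n_g$, where $n_f = |\supp(f)|$ and $n_g = |\supp(g)|$. As in the lemma, the goal is to rule out both $f \prec g$ and $g \prec f$: establishing this is exactly the assertion that $f \incomp g$ or $f \sim g$. Note that the hypothesis is symmetric in $f$ and $g$ (both $\sim$ and $\incomp$ are symmetric relations), as is the thesis, so throughout I may argue up to swapping the roles of $f$ and $g$.

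For the base case $n_f + n_g = 2$ I would observe that $f = [a]$ and $g = [a']$ for single alternatives $a,a'$. The hypothesis gives $a \sim a'$ or $a \incomp a'$; in the former case $f \sim g$ and in the latter $f \incomp g$, so the thesis holds. This is the only place where the weakened hypothesis (allowing $\sim$ in addition to $\incomp$ at the atomic level) changes anything relative to the lemma.

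For the inductive step, assuming w.l.o.g.\ $n_f > 1$, I would pick $a \in \supp(f)$ and decompose $f = \conv{f(a)}{[a]}{f_{-a}}$, where $f_{-a}(a) = 0$ and $f_{-a}(a') = \frac{f(a')}{1 - f(a)}$ for $a' \in \supp(f)\setminus\{a\}$ (well defined since $n_f>1$ forces $f(a)<1$). Both pairs $([a], g)$ and $(f_{-a}, g)$ satisfy the hypothesis, since their $f$-side supports are contained in $\supp(f)$, and both have strictly smaller support-sum; so by induction each pair is either incomparable or equivalent, whence $[a] \not\prec g$, $g \not\prec [a]$, $f_{-a} \not\prec g$, and $g \not\prec f_{-a}$. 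Supposing for contradiction that $f \prec g$, I would write $g = \conv{f(a)}{g}{g}$ and apply Axiom~\ref{ax:upper1} to $\conv{f(a)}{[a]}{f_{-a}} \prec \conv{f(a)}{g}{g}$, obtaining $[a] \prec g$ or $f_{-a} \prec g$, both excluded above. The symmetric argument, applying Axiom~\ref{ax:upper1} to $\conv{f(a)}{g}{g} \prec \conv{f(a)}{[a]}{f_{-a}}$, rules out $g \prec f$. Hence neither strict preference holds.

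The conceptual point, and the only thing worth checking carefully rather than a genuine obstacle, is that the inductive step of the lemma never actually used that the sub-distributions were pairwise incomparable at the atomic level: it used only that the inductive conclusion ``$\incomp$ or $\sim$'' forbids the strict preferences that Axiom~\ref{ax:upper1} would otherwise force, together with the fact that each side can be written as a convex combination with the \emph{same} coefficient $f(a)$ (here via the trivial decomposition $g = \conv{f(a)}{g}{g}$). Since the present hypothesis yields exactly the same inductive conclusion, the argument transfers with no change beyond the enlarged base case, and the induction remains well founded because both recursive calls strictly decrease $n_f+n_g$.
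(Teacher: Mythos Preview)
Your proposal is correct and follows exactly the approach the paper takes: it explicitly states that the proof is very similar to that of Lemma~\ref{lem:incomparable_not_prec}, with only the base case affected by the weakened assumption. You have in fact spelled out more of the details (e.g., the decomposition $g = \conv{f(a)}{g}{g}$ needed to invoke Axiom~\ref{ax:upper1}) than the paper itself does.
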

\begin{proof}
The proof is very similar to the one of Lemma~\ref{lem:incomparable_not_prec},
where only the base case is affected by the weakened assumption.
\end{proof}

\section{Conclusions}

In this work we challenged the customary decision-theoretic assumption of totality of the
preference relations, on the basis of two real-world scenarios.
We proposed a weakening of the classical theory and proved that it retains several
desirable properties, while allowing for incomparable alternatives.

Partial preferences have been already employed in procurement auctions.
In \cite{DBLP:conf/esorics/BonattiFGS11}  second-price auctions have been generalized by considering a bid domain representing information disclosures and two ad hoc partial preference relations have been defined for modeling the sensitivity of the disclosed data.  Then, in \cite{mfcs13-auctions} the previous framework has been extended for modeling also service cost, QoS and  functional differences, etc.
Somewhat surprisingly, it has been shown that extending second price auctions to partial preferences
does not yield truthful mechanisms, since overbidding may be profitable in some contexts. 
This means that, in general, partial preferences may significantly change the theoretical properties of a mechanism and
the axiomatization presented in this work enables to uniformly employ them in the field of Mechanism Design and estimate their impact. 

\bibliography{partialAuctions}
\bibliographystyle{eptcs}
\end{document}